\documentclass[11pt]{article}
\usepackage{amsmath,amssymb,amsthm}
\evensidemargin \oddsidemargin
\newtheorem{theorem}{Theorem}
\newtheorem{lemma}{Lemma}
\newtheorem{corollary}{Corollary}
\newtheorem{definition}{Definition}
\newtheorem{assumption}{Assumption}
\title{Fixed Points, a Predictor-Impossibility Theorem, and Applications}
\author{\\Tom Altman\\\\
University of Colorado Denver and Stilman Advanced Strategies\\
Denver, Colorado USA\\
{\tt tom.altman@ucdenver.edu}}
\date{\today\\}
\begin{document}
\maketitle
\begin{abstract}
We introduce an activation hierarchy consisting of stage machines, stage domains, and stage languages generated by an activation operator. The central result is a Predictor-Impossibility Theorem ({PIT}), which shows that no effective predictor family can uniformly determine all stage languages of the hierarchy. The proof combines the semantic activation construction with the {\it S-m-n}\, Theorem and Kleene's Recursion Theorem to obtain a self-referential fixed point that yields a contradiction.

We then define an aggregate language {\it MIS} and establish a slice theorem connecting aggregate inputs to individual stage languages. This provides a bridge from polynomial-time decidability of {\it MIS} to the existence of a predictor family. By PIT, the aggregate language is, therefore, not polynomial-time decidable.

Under the aggregate growth condition defining valid aggregate objects, {\it MIS} is shown to belong to $\mathsf{NP}$. Combining these two results yields {\it MIS} $\in \mathsf {NP \setminus P}$.
The~paper is organized so that PIT stands independently as a recursion-theoretic result, while the complexity-theoretic consequences are derived from the aggregate-language framework.

\end{abstract}

\section{Introduction}

Fixed-point methods occupy a central place in recursion and computability theory. Classical tools such as the {\it S-m-n}\, Theorem and Kleene's Recursion Theorem [1] demonstrate that effective systems can be forced into forms of self-reference that are impossible to avoid. Such techniques underlie many diagonalization arguments and impossibility results throughout recursion theory [2].

This paper studies a hierarchy generated from stage machines through an activation operator. Each stage consists of a machine, a domain, and an associated stage language. The defining feature of the hierarchy is that stage languages are generated from the machines themselves via activation. This self-generated structure creates a natural setting for investigating prediction and self-reference.

The first objective of the paper is recursion-theoretic. We ask whether there exists an effective family of predictors capable of uniformly predicting all stage languages of the hierarchy. The main result is a Predictor-Impossibility Theorem (PIT), which shows that no such predictor family can exist. The proof combines a computable modification of predictor outputs with the {\it S-m-n}\, Theorem and Kleene's Recursion Theorem to obtain a self-referential fixed point. The resulting stage is forced into a contradiction through the interaction of prediction, activation, and diagonal disagreement.

The second objective is complexity-theoretic. Building on the activation hierarchy, we introduce an aggregate language {\it MIS} and establish a Slice Theorem relating aggregate inputs to individual stage languages. This connection provides a bridge from polynomial-time decidability of the aggregate language to the existence of an effective predictor family. By PIT, the aggregate language {\it MIS}\, is, therefore, not decidable in polynomial time [3].

PIT belongs to a broader family of diagonalization-based impossibility results in computability and complexity theory. Unlike classical hierarchy theorems, however, it is formulated in terms of effective predictor families over activation hierarchies.

The paper is organized around the Predictor-Impossibility Theorem. After introducing the activation hierarchy, we establish PIT as an independent recursion-theoretic result. The aggregate language construction is then developed and connected to the hierarchy through the Slice Theorem.
Under the aggregate growth condition defining valid aggregate objects, {\it MIS} is subsequently shown to belong to $\mathsf{NP}$. Its membership there remains stable under a broad range of aggregate-growth assumptions.
Combined together, these two results yield our main theorem {\it MIS} $\in \mathsf{NP \setminus P}$.

\section{Preliminaries and Notation}
Let $n_1 = 2$ and $n_{i+1} = 2^{n_i}$. 
\begin{assumption}[Acceptable Numbering]
The sequence $P_1,P_2,P_3,\ldots$ forms an acceptable numbering of machine descriptions. Thus, the {\it S-m-n}
Theorem and Kleene's Recursion Theorem apply.
\end{assumption}

\subsection{Machine Descriptions}

For a machine $P_i$, the ordinary machine language is denoted by $L(P_i) = {u \in D_i : P_i (u)=1}$.
The notation $L(P_i)$ always refers to the language accepted by the machine itself.

\subsection{Stages}

Each natural number
$i \in \mathbb N$ determines a stage. Associated with stage $i$ are $P_i, D_i, L_i$. The stage domain is $D_i = \{0,1\}^{n_i}$, where $n_i$ is the stage length.
Stage machine $P_i$ is assumed to execute every
input $u \in D_i$ in time $\Theta (n_i ^i)$.

\subsection{Machine Languages and Stage Languages}

A crucial distinction is maintained throughout the paper.
The machine language $L(P_i)$ is the ordinary language accepted by the stage machine. The stage language $L_i$ is generated through the activation operator. The stage language is defined by $L_i = \Phi(P_i)$. This defining relation will be referred to as the \emph{Stage Identity}.

Note that the machine language $L(P_i)$ and the stage language $L_i = \Phi(P_i)$ are distinct in general and should not be identified as the same. This distinction is essential for the fixed-point argument developed in Section 4.

\subsection{Aggregate Objects}

A valid aggregate object in stage $i$ is a tuple $U=(u_1,\ldots,u_m)$ whose components satisfy $u_j \in D_i$ for every $j$. Thus, $|u_j|=n_i$.
The aggregate input size is $q_i = |U| = mn_i$.

\subsection{Distinguished Strings}
For each stage $i$, define $s_i = 0^{n_i}$ as a {\it Distinguished String}. Clearly, $s_i\in D_i = \{0,1\}^{n_i}$.

\subsection{Summary of Notation}
\vspace*{3mm}
\begin{center}
\begin{tabular}{ll}
\hline
Symbol & Meaning\\
\hline
$P_i$ & stage machine\\
$D_i$ & stage domain, $|D_i| \geq 2$\\
$n_i$ & stage length\\
$L(P_i)$ & ordinary machine language\\
$L_i$ & stage language\\
$\Phi$ & activation operator\\
$s_i$ & distinguished stage string $0^{n_i}$\\
$U$ & aggregate object\\
$m$ & number of aggregate components\\
$q_i$ & aggregate input size\\
\hline\\
\end{tabular}
\end{center}
\section{The Activation Hierarchy}
This section introduces the activation hierarchy that serves as the foundation for all subsequent constructions.
\subsection{Stage Machines and Domains}
Each stage machine $P_i$ is total on its stage domain and therefore induces a Boolean-valued function $P_i:D_i \rightarrow \{0,1\}$.
The ordinary machine language is $L(P_i) = \{u\in D_i : P_i(u)=1\}$.
\subsection{Activation}
\begin{definition}[Active Stage]
Stage machine $P_i$ is active if
{\it ACTIVE}$(P_i)=1$ iff\, $\exists u\in D_i$\,\,
$P_i(u)=1$.
\end{definition}

Hence, activity is a semantic property that depends only on the accepted behavior of the machine in its stage domain.
\begin{definition}[Activation Operator]
For every stage machine $P_i$, $\Phi(P_i) = \{v\in D_i : P_i(v)=0 ~\wedge$
{\it ACTIVE}$(P_i)$\}.
\end{definition}

\begin{definition}[Stage Language]

The stage language associated with stage $i$ is $L_i = \Phi(P_i).$
\end{definition}

The hierarchy, therefore, consists of triples $(P_i,D_i,L_i)$.

\subsection{Stage Identity}
The defining relation $L_i = \Phi (P_i)$ will be called the \emph{Stage Identity}. It is a definition, rather than a theorem. Consequently, whenever a stage index $i$ is fixed, the expressions $L_i$ and $\Phi(P_i)$
may be used interchangeably.

\subsection{Activation Complement}
\begin{lemma}[Activation Complement Lemma (ACL)]
For every active stage machine $P_i$, we have
$\Phi(P_i) = D_i \setminus L(P_i) =$ 
$\overline{L(P_i)}$ relative to the stage domain.
\end{lemma}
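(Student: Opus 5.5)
The plan is to unfold the definitions directly; the lemma is essentially a rewriting of the definition of $\Phi$ once activity is fixed. First I fix an active stage machine $P_i$, so that by the Definition of Active Stage $\mbox{\it ACTIVE}(P_i)=1$. Substituting this into the Definition of the Activation Operator, the conjunct $\mbox{\it ACTIVE}(P_i)$ in the set-builder condition becomes identically true for every $v\in D_i$. It can therefore be dropped, giving
\[
\Phi(P_i)=\{v\in D_i : P_i(v)=0\}.
\]

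Second, I use the fact from the subsection on stage machines that $P_i$ is total on $D_i$ and induces a Boolean-valued function $P_i:D_i\rightarrow\{0,1\}$. Hence for each $v\in D_i$ exactly one of $P_i(v)=0$ and $P_i(v)=1$ holds. I then show the two inclusions.
\begin{itemize}
\item If $v\in\Phi(P_i)$, then $v\in D_i$ and $P_i(v)=0\neq 1$. So $v\notin L(P_i)=\{u\in D_i: P_i(u)=1\}$, hence $v\in D_i\setminus L(P_i)$.
\item Conversely, if $v\in D_i\setminus L(P_i)$, then $P_i(v)\neq 1$. By totality and Boolean values, $P_i(v)=0$, so $v\in\Phi(P_i)$.
\end{itemize}
This establishes $\Phi(P_i)=D_i\setminus L(P_i)$. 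Since $L(P_i)\subseteq D_i$ by definition, this set is exactly the complement $\overline{L(P_i)}$ taken relative to the stage domain $D_i$.

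The only point requiring care, and the closest thing to an obstacle, is this reliance on totality. Without it, $v\notin L(P_i)$ would not force $P_i(v)=0$: the machine could diverge or output some other value, and the reverse inclusion would fail. So in writing the proof I would cite explicitly the standing convention that each $P_i$ is total on $D_i$ with values in $\{0,1\}$, which is guaranteed by the $\Theta(n_i^i)$ running-time assumption. I would also remark that the hypothesis that $P_i$ is active is essential. For an inactive machine the Activation Operator gives $\Phi(P_i)=\emptyset$, whereas $D_i\setminus L(P_i)=D_i\neq\emptyset$ because $|D_i|\ge 2$, so the identity is genuinely restricted to active stages.
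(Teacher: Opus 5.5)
Your proof is correct and follows the paper's argument: once \emph{ACTIVE}$(P_i)=1$ the conjunct drops out, leaving $\Phi(P_i)=\{v\in D_i : P_i(v)=0\}$, and totality of $P_i$ on $D_i$ makes this exactly the complement of $L(P_i)$ within $D_i$. Your two-inclusion check and your remarks on why activity and totality are needed only spell out what the paper states in one line.
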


\begin{proof}
Assume {\it ACTIVE}$(P_i) = 1$.
Then $\Phi (P_i) = \{v \in D_i : P_i(v) = 0\}$.
Since $L(P_i) = \{v \in D_i : P_i(v) = 1\}$
and $P_i$ is total on $D_i$, the two sets are complementary within the stage domain.

Hence, $\Phi(P_i) = D_i \setminus L(P_i) = \overline{L(P_i)}$.
\end{proof}

\section{Predictor Impossibility Theorem (PIT)}
We now establish the central recursion-theoretic result of the paper.
\subsection{Predictors}
\begin{definition}[Predictor Family]
A predictor family is an effective family of machines $G(i)$ satisfying
$L(G(i)) = L_i$ for every stage $i$.
\end{definition}

Thus, a predictor family correctly determines the stage language associated with every stage of the hierarchy.
\subsection{Index Realization}
Given a predictor family $G$, define a machine transformation
$F(i) = G(i)\,\cup\,\{s_i\}$.
Equivalently, 
\vspace*{3mm}\\
$F(i)(x) = 
\begin{cases}
1, & x=s_i,\\
G(i)(x), & x\neq s_i.
\end{cases}$

\begin{lemma}[Index Realization]
There exists a computable index transformation
$f:\mathbb N\rightarrow\mathbb N$ such that $P_{f(i)} = F(i)$
for every stage $i$.
\end{lemma}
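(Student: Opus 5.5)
The plan is to build $f$ from two pieces: a single uniform machine that simulates $F(i)$ from the parameter $i$, and the {\it S-m-n}\, Theorem, which turns that parameter into a computable index. Both are available under the Acceptable Numbering assumption.

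First, I would fix a single two-argument machine $H$ computing the partial function
\[
H(i,x)=\begin{cases}1, & x=0^{n_i},\\ P_{g(i)}(x), & x\neq 0^{n_i}.\end{cases}
\]
Here $g$ is the computable function that produces an index for $G(i)$; such a $g$ exists precisely because the predictor family is assumed to be effective. $H$ is computable: on input $(i,x)$ it computes $n_i$ by iterating $n_1=2$, $n_{k+1}=2^{n_k}$. It then compares $x$ with $s_i=0^{n_i}$. If they differ, it computes $g(i)$ and runs the universal machine on $(g(i),x)$. Since the numbering is acceptable, there is an index $e$ with $P_e(\langle i,x\rangle)=H(i,x)$.

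Next, I would apply the {\it S-m-n}\, Theorem to $e$ with the first argument fixed. This gives a total computable (indeed primitive recursive) function $s$ such that $P_{s(e,i)}(x)=H(i,x)$ for all $i,x$. I then set $f(i)=s(e,i)$. By construction, $P_{f(i)}(x)=1$ when $x=s_i$, and $P_{f(i)}(x)=G(i)(x)$ otherwise. This is exactly the casewise definition of $F(i)$, so $P_{f(i)}=F(i)$ as machines, or more precisely as the functions they compute.

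The main obstacle is interpreting ``effective family'' so that $i\mapsto$ an index of $G(i)$ is computable. I would state this as the meaning of effectivity. A related subtlety is that the equality $P_{f(i)}=F(i)$ should be read extensionally, on $D_i$ at least. One further point the paper tacitly requires is that $P_{f(i)}$ inherits totality and the $\Theta(n_{f(i)}^{f(i)})$ running-time assumption on its own stage domain. I would not try to verify this last condition; I would flag it as a point where the lemma, as used later, needs an extra hypothesis beyond what {\it S-m-n}\, delivers.
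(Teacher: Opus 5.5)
Your proposal is correct and is the paper's argument made explicit: the paper also uses effectivity of $G$ to compute a description of $G(i)$ from $i$, notes that $s_i=0^{n_i}$ is computable from $i$, and invokes the \emph{S-m-n} Theorem, while you additionally write out the uniform machine $H$ and its index $e$. Your caveat about totality and the $\Theta(n_{f(i)}^{f(i)})$ running time is well taken, but it concerns the paper's standing assumptions on stage machines rather than this lemma; separately, your aside that the \emph{S-m-n} function is ``indeed primitive recursive'' is not guaranteed for an arbitrary acceptable numbering, though you never use it.
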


\begin{proof}
Since $G$ is effective, a description of $(G(i))$ can be computed from $i$.
The distinguished string $s_i=0^{n_i}$ is computable from the stage index.
Hence, a procedure exists that constructs machine
\vspace*{3mm}\\
$x\,\mapsto
\begin{cases}
1,& x=s_i,\\
G(i)(x),& x\neq s_i.
\end{cases}$
\vspace*{3mm}

The existence of a computable index transformation $f$ follows from the {\it S-m-n}\, Theorem.
\end{proof}
\subsection{Predictor Impossibility}
\begin{theorem}[Predictor Impossibility Theorem (PIT)]
No effective predictor family exists.
\end{theorem}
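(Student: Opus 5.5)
The plan is a diagonal argument by contradiction that uses a Kleene fixed point. Suppose an effective predictor family $G$ exists, so that $L(G(i)) = L_i$ for every stage $i$. By the Index Realization Lemma there is a computable $f:\mathbb N\to\mathbb N$ with $P_{f(i)} = F(i)$, where $F(i)$ agrees with $G(i)$ everywhere except that it forces acceptance of the distinguished string $s_i$. Since the numbering is acceptable (Assumption 1), Kleene's Recursion Theorem applied to $f$ gives an index $e$ with $P_e \simeq P_{f(e)} = F(e)$, meaning the two machines compute the same function. I will then restrict attention to stage $e$ and its domain $D_e=\{0,1\}^{n_e}$.

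First I show that stage $e$ is active. Because $P_e(s_e)=F(e)(s_e)=1$ and $s_e\in D_e$, we get $\mathit{ACTIVE}(P_e)=1$. The Activation Complement Lemma then gives $L_e=\Phi(P_e)=D_e\setminus L(P_e)$. On the other side, the predictor property at stage $e$ gives $L(G(e))=L_e$.

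Next I derive the contradiction at any point other than $s_e$. Since $|D_e|\ge 2$, pick $x\in D_e$ with $x\neq s_e$. At this $x$ we have $P_e(x)=F(e)(x)=G(e)(x)$. Hence
\[
x\in L(P_e)\iff x\in L(G(e))\iff x\in L_e\iff x\notin L(P_e),
\]
which is impossible. So no effective predictor family exists.

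The main obstacle is legitimizing the fixed point as a genuine stage of the hierarchy. The Recursion Theorem only guarantees that $P_e$ and $P_{f(e)}$ are semantically equal. The argument needs three further facts:
\begin{itemize}
\item the resulting $P_e$ is total on $D_e$, which the Activation Complement Lemma requires;
\item $P_e$ is an admissible stage machine, including the standing assumption that stage machines run in time $\Theta(n_e^e)$;
\item the stage language $L_e$ is determined by $P_e$ through the Stage Identity alone.
\end{itemize}
Totality on $D_e$ follows if $G(e)$ is total on $D_e$, which I would take to be part of "effective family of machines". The timing condition is the delicate point. I would first try to absorb it by padding $F(i)$ so that it runs in the prescribed time bound. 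This padding must be done uniformly in $i$, so that $f$ stays computable via the S-m-n Theorem. If that fails, I would weaken the requirement to hold only semantically on $D_e$. I would also check that the padding does not change the accept/reject behaviour of $F(i)$ on $D_i$.
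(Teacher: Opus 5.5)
Your proof is correct and follows the paper's argument step for step: Index Realization, a Kleene fixed point $e$ with $P_e\simeq F(e)$, activity from $P_e(s_e)=1$, the Activation Complement Lemma, and the contradiction at a point $x\neq s_e$ of $D_e$. The one difference is that you get totality of $P_e$ on $D_e$ from totality of $G(e)$, whereas the paper uses its blanket assumption that every stage machine is total; yours is the sounder justification, because with partial predictors the statement is actually false ($L_i$ is semi-decidable uniformly in $i$, and the fixed point simply diverges off $s_e$), and since the $\Theta(n_e^e)$ time bound is never used, you can drop the padding step.
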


\begin{proof}
Assume that a predictor family $G$ exists. Let $f$ be the index transformation obtained in the Index Realization Lemma. By Kleene's Recursion Theorem there exists an index $i^*$ such that
$P_{i^{*}} = P_{f(i^{*})}$. Consequently, $L(P_{i^{*}}) = L(P_{f(i^{*})})$.
Using the definition of $f$, $L(P_{i^{*}}) = L(G(i^{*})) \cup \{s_{i^*}\}$.
Using the predictor property, $L(G(i^{*})) = L_{i^{*}}$.
Therefore, it follows that $L(P_{i^{*}}) = L_{i^{*}} \cup \{s_{i^{*}}\}$.

Applying Stage Identity, $L_{i^{*}} = \Phi(P_{i^{*}})$, yields
$L(P_{i^{*}}) = \Phi(P_{i^{*}}) \cup \{s_{i^{*}}\}$.
Since $s_{i^{*}} \in L(P_{i^{*}})$, there exists $u\in D_{i^{*}}$
such that $P_{i^{*}}(u)=1$.
Thus, {\it ACTIVE}$(P_{i^{*}})=1$.
The Activation Complement Lemma, thus, gives
$\Phi(P_{i^{*}}) = \overline{L(P_{i^{*}})}$.
Then $L(P_{i^{*}})$ = 
$\overline {L(P_{i^{*}})} \cup \, \{s_{i^{*}}\}$.
Pick any $x\in D_{i^{*}}, ~x \neq s_{i^{*}}$. We~get
$x\in L(P_{i^{*}}) \iff x \in \overline{L(P_{i^{*}})}$. Contradiction. Hence no effective predictor family exists.
\end{proof}
\section{The Aggregate Language {\emph {MIS}}}
The Predictor-Impossibility Theorem is independent of complexity theory. We now introduce an aggregate language whose polynomial-time decidability would imply the existence of an effective predictor family, thus connecting the recursion-theoretic obstruction to computational complexity.
\subsection{Aggregate Acceptance}
For each stage $i$, let $z_i \in D_i$,
where $L_i(z_i) = 0$, denote an inert sentinel associated with that stage.
\begin{definition}[Aggregate Language]
Let $U=(u_1,\ldots,u_m)$
be a valid aggregate object belonging to stage $i$.
The aggregate language {\it MIS} is defined by {\it MIS}$(U) = 1$
\,iff\, $\exists \,j\in \{1,\ldots,m\} \quad L_i(u_j) = 1$.
\end{definition}
Thus, an aggregate object is accepted exactly when at least one of its components belongs to the corresponding stage language.
\subsection{Slice Theorem}
Every stage language occurs as a distinguished slice of the aggregate language.
\begin{theorem}[Slice Theorem]
For every stage $i$,  {\it MIS}$(u,z_i,\ldots,z_i) = L_i(u)$.
\end{theorem}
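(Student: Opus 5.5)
The plan is a direct unfolding of the definition of \emph{MIS}, together with a preliminary check that the sentinel $z_i$ actually exists. Fix a stage $i$, a string $u \in D_i$, and any number $m-1 \geq 0$ of copies of $z_i$. Put $U = (u, z_i, \ldots, z_i)$. Every component lies in $D_i$, so $U$ is a valid aggregate object of stage $i$ in the sense of the subsection on Aggregate Objects, and \emph{MIS}$(U)$ is defined.

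First I will justify that a sentinel $z_i \in D_i$ with $L_i(z_i) = 0$ exists, i.e.\ that $L_i \neq D_i$. I will split on activity.
\begin{itemize}
\item If \emph{ACTIVE}$(P_i) = 0$, then $\Phi(P_i) = \emptyset$ by the definition of the activation operator. By the Stage Identity $L_i = \emptyset$, and any element of $D_i$ serves as $z_i$ (recall $|D_i| \geq 2$).
\item If \emph{ACTIVE}$(P_i) = 1$, then some $w \in D_i$ has $P_i(w) = 1$. By the Activation Complement Lemma $L_i = \Phi(P_i) = D_i \setminus L(P_i)$, so $w \notin L_i$, and we may take $z_i = w$.
\end{itemize}
This is the only step with any content. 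I expect it to be the main point needing care, since the paper posits $z_i$ without justifying it. If the sentinel is simply taken as given, this step can be skipped.

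Next I unfold the definition:
\[
\text{\emph{MIS}}(U) = 1 \iff L_i(u) = 1 \ \text{or}\ \exists j \geq 2 \ L_i(z_i) = 1 .
\]
Since $L_i(z_i) = 0$ by the choice of the sentinel, the second disjunct is false. Hence \emph{MIS}$(U) = 1$ iff $L_i(u) = 1$. Both sides are Boolean-valued, so \emph{MIS}$(u, z_i, \ldots, z_i) = L_i(u)$, which is the claimed identity for every stage $i$. The degenerate case $m = 1$ is immediate from the same unfolding.
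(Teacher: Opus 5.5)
Your proof is correct and matches the paper's. The paper's proof is also a direct unfolding of the definition of aggregate acceptance: since $L_i(z_i)=0$, the tuple $(u,z_i,\ldots,z_i)$ is accepted exactly when $u \in L_i$. Your extra case split on \emph{ACTIVE}$(P_i)$ shows that a sentinel exists, i.e.\ $L_i \neq D_i$, because any $w$ with $P_i(w)=1$ lies outside $\Phi(P_i)$. That is a valid supplement to what the paper simply posits, though the identity itself does not need it once $z_i$ is given.
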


\begin{proof}
By definition of aggregate acceptance, the aggregate object
$(u,z_i,\ldots,z_i)$ is accepted iff the distinguished component $u$ belongs to the stage language.
Hence, {\it MIS}$(u,z_i,\ldots,z_i) = L_i(u)$.
\end{proof}
\subsection{Bridge Theorem}

The Slice Theorem converts a polynomial-time decider for ${\it MIS}$ into an effective predictor family.
\begin{theorem}[Bridge Theorem]
If {\it MIS} $\in \mathsf{P}$, then an effective predictor family exists.
\end{theorem}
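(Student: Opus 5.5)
The plan is to build the predictor family directly from a decider for \emph{MIS} by composing it with the slice map supplied by the Slice Theorem. Assume $M$ is a polynomial-time machine deciding \emph{MIS}. For each stage $i$ I would define $G(i)$ as the machine that, on input $u \in D_i$, forms the aggregate object $(u,z_i,\ldots,z_i)$, using some fixed number $m\ge 1$ of components (say $m=2$), runs $M$ on it, and outputs $M$'s answer. By the Slice Theorem, $G(i)(u)=\mbox{\it MIS}(u,z_i,\ldots,z_i)=L_i(u)$. Hence $L(G(i))=L_i$ for every $i$, which is exactly the Predictor Family condition. Effectiveness of $i\mapsto G(i)$ would then follow from the {\it S-m-n}\, Theorem, as in the Index Realization Lemma, provided each ingredient is computable from $i$.

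The key steps, in order, are as follows.
\begin{enumerate}
\item Check that $n_i$, and hence $D_i$ and the aggregate length $q_i=mn_i$, are computable from $i$. This is immediate from $n_1=2$ and $n_{i+1}=2^{n_i}$.
\item Show that an inert sentinel $z_i$ exists and is computable from $i$.
\item Check that $M$, on inputs of length $mn_i$, decides membership relative to stage $i$. The stage must be recoverable from the input length, which holds because the $n_i$ grow so fast that $mn_i$ determines $i$ for bounded $m$.
\item Apply {\it S-m-n} to obtain an index function for $G$.
\end{enumerate}

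The main obstacle is step 2. The sentinel is specified semantically ($L_i(z_i)=0$), so it must be found without already knowing $L_i$.
\begin{itemize}
\item First, $z_i$ exists. If $P_i$ is inactive, then $L_i=\emptyset$ and any string (e.g.\ $s_i$) works. If $P_i$ is active, some $u$ has $P_i(u)=1$, and such a $u\notin\Phi(P_i)$.
\item To compute $z_i$, exhaustively run the total machine $P_i$ on all of $D_i$. Take the first $u$ with $P_i(u)=1$ if one exists, and otherwise take $s_i$.
\end{itemize}
This search costs $2^{n_i}\cdot\Theta(n_i^i)$, which is not polynomial in $n_i$. That does not matter here, because the Predictor Family definition only demands effectiveness, not efficiency. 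Note that the polynomial-time hypothesis therefore plays no real role: the argument only uses that \emph{MIS} is decidable.

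This observation deserves flagging. The same exhaustive search shows that $L_i$ itself is decidable uniformly in $i$, since $\Phi(P_i)$ is computable from the total machine $P_i$. So a predictor family exists outright, with no hypothesis on \emph{MIS}, and that conflicts with PIT. I therefore expect the Bridge Theorem itself to go through as sketched, while the proof of PIT is where the trouble lies. That proof evaluates $\Phi$ at the fixed point $i^*$ as though $P_{i^*}$ were simultaneously the stage machine of stage $i^*$ and a machine built from $G(i^*)$. It also applies the Recursion Theorem to a numbering in which every $P_i$ is assumed total with running time $\Theta(n_i^i)$, and such a numbering is not acceptable. I would record this tension explicitly rather than rely on the combination to conclude $\mbox{\it MIS}\notin\mathsf P$.
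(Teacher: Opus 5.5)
Your argument is the paper's own: compose a decider for \emph{MIS} with the slice $(u,z_i,\ldots,z_i)$ and apply the Slice Theorem. You also supply what the paper leaves implicit, namely that $z_i$ is computable from $i$, though padding with copies of $u$ instead (since \emph{MIS}$(u,u,\ldots,u)=L_i(u)$) removes the need for the sentinel and for totality of $P_i$, and the number of components should be a computable $m(i)$ large enough for the Section~6 growth condition (which the Main Theorem needs in force), not a fixed $m=2$. Your side diagnosis is right that an acceptable numbering with every $P_i$ total on $D_i$ in time $\Theta(n_i^i)$ is inconsistent and yields a predictor family outright, but the defect lies in that assumption, not in PIT's fixed-point step, which is sound against total predictor families once the $P_i$ may be partial (as an acceptable numbering forces); in that consistent setting $L_i$ is not uniformly decidable, so it is the $\mathsf{NP}$-membership claim that fails instead.
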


\begin{proof}
Assume {\it MIS} $\in \mathsf{P}$. Let $M$ be a polynomial decider for {\it MIS}. Define $G(i)(u) = M(u,z_i,\ldots,z_i)$.
By the Slice Theorem, $L(G(i)) = L_i$ for every stage $i$.
Thus $G$ is an effective predictor family.
\end{proof}
\subsection{Complexity Consequence}
\begin{corollary}
{\it MIS} $ \notin \mathsf{P}$.
\end{corollary}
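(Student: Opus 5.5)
The corollary is meant to follow by contraposition from two results already stated: the Bridge Theorem and the Predictor Impossibility Theorem. The plan is to assume, for contradiction, that \emph{MIS} $\in \mathsf{P}$, and fix a polynomial-time decider $M$. The Bridge Theorem then gives an effective family $G(i)(u)=M(u,z_i,\ldots,z_i)$. By the Slice Theorem this family satisfies $L(G(i))=L_i$ for every stage $i$, so it is a predictor family. This contradicts PIT, and hence \emph{MIS} $\notin \mathsf{P}$.

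Stated this way the argument is a two-line syllogism. The real work is checking that its hypotheses are actually available, and three points need verifying.

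First, the map $i\mapsto z_i$ must be computable, since otherwise $G$ is not effective. This also requires that $z_i$ exists at all, i.e.\ that $L_i\neq D_i$ for every $i$. For inactive $P_i$ we have $L_i=\emptyset$, so any string works. For active $P_i$, the Activation Complement Lemma gives $L_i=\overline{L(P_i)}$, which is a proper subset of $D_i$ because $L(P_i)\neq\emptyset$. Existence is therefore fine. Computability is a separate matter: finding $z_i$ requires searching $D_i$ and deciding activity. I would argue this is possible because $P_i$ is total on the finite set $D_i$.

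Second, the argument needs $M$ to be total on inputs of the form $(u,z_i,\ldots,z_i)$. This holds because these inputs are valid aggregate objects.

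Third, the argument needs PIT to apply to exactly this notion of ``effective''.

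The third point is where I expect the real obstacle, and I would treat it with suspicion rather than cite PIT blindly. The observations used for the first point also yield a trivial predictor family without any appeal to \emph{MIS}: set $G(i)(u)=1-P_i(u)$ if some $v\in D_i$ has $P_i(v)=1$, and $G(i)(u)=0$ otherwise. This is computable uniformly in $i$ because each $P_i$ is total on the finite domain $D_i$. By the Activation Complement Lemma it satisfies $L(G(i))=L_i$. So PIT, as proved, effectively shows that the standing assumptions are jointly inconsistent. Those assumptions are that every index of an acceptable numbering is a stage machine that is total on $D_i$ and runs in time $\Theta(n_i^i)$. They cannot all hold, because the recursion-theorem fixed point $P_{i^*}=F(i^*)$ need not obey the stage constraints.

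Consequently, the corollary's proof is formally valid relative to the paper's framework, but it establishes nothing about $\mathsf{P}$ versus $\mathsf{NP}$ until that consistency issue is resolved. In writing the proof, I would present the contrapositive argument above. I would then explicitly record that it inherits whatever validity PIT has. I would identify the consistency of the stage assumptions, in particular the legitimacy of applying Kleene's theorem inside the restricted class of stage machines, as the step that must be repaired before the corollary carries real content.
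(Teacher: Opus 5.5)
Your argument is the paper's proof: assuming \emph{MIS} $\in\mathsf{P}$, the Bridge Theorem turns a decider $M$ into $G(i)(u)=M(u,z_i,\ldots,z_i)$, and this contradicts PIT. Your objection to PIT is also correct. Since every $P_i$ is total on the finite set $D_i$, your brute-force family is a genuine predictor family, so PIT is false under the stated stage assumptions. The step that breaks is the appeal to Kleene's theorem. Acceptability is incompatible with blanket totality, because an acceptable numbering contains indices of the nowhere-defined function. You can also bypass your first checkpoint entirely: \emph{MIS}$(u,u,\ldots,u)=L_i(u)$ once enough copies are taken to meet the growth condition, so no sentinel $z_i$ is needed.

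Two sharpenings show that no repair of the consistency issue rescues the intended result.

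First, the proof of the Bridge Theorem never uses the polynomial bound. Any decider for \emph{MIS}, applied to constant tuples, yields a uniformly computable, total, Boolean predictor family. So in every reading, PIT forces \emph{MIS} to be undecidable. Since $\mathsf{NP}$ languages are decidable, PIT cannot coexist with Theorem 4, and the paper's own results already contradict each other.

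Second, your remark about the fixed point needs a refinement. When $G(i^*)$ is total and Boolean on $D_{i^*}$, so is $F(i^*)$, and hence so is $P_{i^*}$. The fixed point can therefore violate only the time bound, and the contradiction never uses the time bound. Two readings follow:
\begin{itemize}
\item For an honest acceptable numbering with the totality and time assumptions dropped, PIT for total predictors is a correct theorem and the corollary holds. It holds only because \emph{MIS} is undecidable, so Theorem 4 fails.
\item If you keep totality and the time bounds and drop acceptability, the corollary can be false outright. For example, let every $P_i$ reject all of $D_i$ after $n_i^i$ steps. Then every $L_i=\emptyset$, so \emph{MIS} is empty on valid objects and lies in $\mathsf{P}$.
\end{itemize}
So your conclusion stands: the syllogism is valid but unsound in every reading where \emph{MIS} $\in\mathsf{NP}$, and it says nothing about $\mathsf{P}$ versus $\mathsf{NP}$.
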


\begin{proof}
Assume {\it MIS} $\in \mathsf{P}$.
Then, by the Bridge Theorem, an effective predictor family exists.
This~contradicts the Predictor-Impossibility Theorem.
\end{proof}
\section{Membership of {\emph {MIS}} in $\mathsf{NP}$}
\subsection{Aggregate Growth}
Let $U=(u_1,\ldots,u_m)$ be a valid aggregate object belonging to stage $i$.
Each component satisfies $|u_j| = n_i$.
The aggregate input size is $q_i = |U| = mn_i$.
Valid aggregate objects satisfy the aggregate growth condition
$m = \Omega(n_i^{i-k})$ for some fixed constant $k$,
independent of the input. Consequently, $q_i = \Omega(n_i^{i-k+1})$.
\subsection{Witness Verification}
By definition, {\it MIS}$(U)=1$ if and only if there exists a component $u_j$ such that $L_i(u_j)=1$.
A~successful component therefore serves as a witness.
Given $(U,u_j)$, a verifier checks
that $u_j$ is one of the components of $U$ and
that $L_i(u_j)=1$.

The first step requires $O(q_i)$ time.
The second requires $\Theta(n_i^i)$ time.
Since $q_i = \Omega(n_i^{i-k+1})$,
we obtain $n_i = O\left(q_i^{1/(i-k+1)} \right)$.
Hence $n_i^i = O\left(q_i^{i/(i-k+1)} \right)$.
Since $\frac{i}{i-k+1} = 1+\frac{k-1}{i-k+1} \le k$,
the exponent is bounded by a constant independent of the input.
Thus, $n_i^i = O(q_i^k)$, which is polynomial in $q_i$.

\subsection{$\mathsf{NP}$ Membership}
\begin{theorem}[$\mathsf{NP}$ Membership]
{\it MIS} $\in \mathsf{NP}$.
\end{theorem}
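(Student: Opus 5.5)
The plan is to follow the standard certificate characterization of $\mathsf{NP}$ and assemble the pieces already prepared in the Witness Verification subsection. I will exhibit a verifier $V$ and a polynomial $p$ such that \emph{MIS}$(U)=1$ iff there is a certificate $w$ with $|w|\le p(|U|)$ and $V(U,w)=1$, with $V$ running in time polynomial in $q_i=|U|$.

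\textbf{Certificate and verifier.} First I make the certificate precise. Unfolding the Stage Identity and the Activation Operator, $L_i(u_j)=1$ holds iff $P_i(u_j)=0$ and \emph{ACTIVE}$(P_i)=1$. Activity is itself existential over $D_i$, so checking $u_j$ alone is not obviously enough. I therefore take as certificate a pair $w=(j,u')$, where $j\le m$ indexes a component and $u'\in D_i$ is a string with $P_i(u')=1$. Its length is $O(\log m + n_i)=O(q_i)$, hence polynomially bounded. The verifier $V$ proceeds as follows:
\begin{enumerate}
\item parse $U$ into blocks of length $n_i$, recovering $i$ from $n_i$ via the tower $n_{i+1}=2^{n_i}$, in $O(q_i)$ time;
\item extract $u_j$;
\item run $P_i$ on $u_j$ and accept only if the output is $0$;
\item run $P_i$ on $u'$ and accept only if the output is $1$.
\end{enumerate}
Completeness and soundness both follow directly from the definition of \emph{MIS} together with the Activation Complement Lemma: an accepted certificate witnesses that $u_j\in\Phi(P_i)=L_i$, and conversely any accepted $U$ has such a $j$ and, being active, such a $u'$.

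\textbf{Running time.} Each of steps 3 and 4 costs $\Theta(n_i^i)$ by the stage-time assumption. I then invoke the aggregate growth condition and the computation already carried out: $n_i^i=O(q_i^{i/(i-k+1)})$, and $\frac{i}{i-k+1}\le k$ for every $i\ge k$. Together these give total time $O(q_i+q_i^k)$, a polynomial with exponent $k$ independent of the input.

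\textbf{Expected obstacle.} The main obstacle is the uniformity hidden in this step. The bound $\frac{i}{i-k+1}\le k$ fails for $i<k$, and the constants in $\Theta(n_i^i)$ and in $\Omega(n_i^{i-k})$ must not depend on $i$. My first move is to handle the finitely many stages $i<k$ separately. For each such $i$, membership of $U$ is decided by hardwiring a finite table of $L_i$ over the finite domain $D_i$; this costs $O(q_i)$ and enlarges $V$ only by a constant. For $i\ge k$ I use the bound above, reading the stage-time assumption and the growth condition as holding with uniform constants, as the paper's notation intends. Finally, I need that a description of $P_i$ is obtainable from $i$ by the acceptable numbering, within time polynomial in $q_i$, so that $V$ can simulate it. Since $i\le\log^* n_i\le q_i$, producing that description is cheap relative to the $\Theta(n_i^i)$ simulation. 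With these pieces, $V$ is a polynomial-time verifier and \emph{MIS}$\in\mathsf{NP}$ follows.
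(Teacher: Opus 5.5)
Your proof is correct within the paper's framework, up to two small additions noted below. It follows the paper's template: guess a witness, verify by running $P_i$, and turn $n_i^i$ into a power of $q_i$ via the growth condition. Your certificate, however, differs from the paper's in a way that matters.

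The paper takes the bare component $u_j$ as the witness and asserts that checking $L_i(u_j)=1$ costs $\Theta(n_i^i)$. That bound covers only one run of $P_i$. Membership $u_j\in\Phi(P_i)$ also requires \emph{ACTIVE}$(P_i)=1$, an existential over all $2^{n_i}$ strings of $D_i$. This is exactly the distinction between $L(P_i)$ and $L_i$ that the paper itself stresses. Your pair $(j,u')$ reduces verification to two runs of $P_i$, so the stage-time bound genuinely applies, and soundness and completeness follow directly from the definition of $\Phi$. The paper's version buys only brevity. It is justified only when the activity of $P_i$ is known in advance, since then the Activation Complement Lemma gives $L_i(u_j)=1-P_i(u_j)$.

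You also handle points the paper passes over. The hidden constants must be uniform in $i$. For $i<k$ the estimate $n_i=O(q_i^{1/(i-k+1)})$ is meaningless because $i-k+1\le 0$. There, though, $n_i\le n_{k-1}$ is a constant, so plain simulation is already $O(1)$ and your table is unnecessary but harmless.

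Two small additions are still needed. First, the verifier must check that $U$ is a valid aggregate object ($n_i$ a tower value and $m\ge c\,n_i^{i-k}$) and reject otherwise. \emph{MIS} contains only valid objects, and an input with $m=1$ at a large stage would otherwise force a $\Theta(n_i^i)$ simulation, which is not polynomial in $q_i=n_i$. Second, $i\le\log^* n_i$ makes producing $P_i$ cheap only for a concretely efficient numbering such as the standard one, not for an arbitrary acceptable numbering.

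Finally, both your argument and the paper's inherit the stage-time hypothesis. It cannot literally hold for every index of an acceptable numbering, because the nowhere-defined function has infinitely many indices. The conclusion is therefore only as sound as that assumption.
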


\begin{proof}
The witness is a successful component $u_j$.
By the preceding analysis, the witness can be verified in time
$O(q_i^k)$, which is polynomial in the size of the aggregate input.
Thus, {\it MIS} $\in \mathsf{NP}$.
\end{proof}
The proof {\it MIS} $\in \mathsf{NP}$ uses the aggregate growth condition
$m=\Omega(n_i^{i-k})$ for a fixed constant $k$.
This condition can be weakened.
Define {\it MIS}$^{\,(r)}$ to be the restriction of {\it MIS} to aggregate objects satisfying $m=\Omega(n_i^{i-r})$.
For every fixed $r$, {\it MIS}$^{\,(r)}\in \mathsf{NP}$.
The proof is identical.
Consequently, $\mathsf{NP}$ membership remains stable under a broad range of aggregate-growth assumptions.
\section{Main Theorem}
\begin{theorem}[Main Theorem]
{\it MIS} $\in \mathsf{NP \setminus P}$.
\end{theorem}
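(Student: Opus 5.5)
The plan is to obtain the Main Theorem by intersecting the two results already stated: the Corollary following the Bridge Theorem gives \emph{MIS} $\notin \mathsf{P}$, and the $\mathsf{NP}$ Membership Theorem gives \emph{MIS} $\in \mathsf{NP}$. Formally this is immediate. Most of the work is checking that both results speak about \emph{the same} language over \emph{the same} set of valid inputs.

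\textbf{Step 1: fix one language.} I would fix a single formal language. Let \emph{MIS} be the set of encodings of valid aggregate objects $U=(u_1,\ldots,u_m)$ belonging to some stage $i$, subject to the aggregate growth condition $m=\Omega(n_i^{i-k})$ for one fixed $k$, with the defining predicate $\exists j\; L_i(u_j)=1$. Non-valid strings are rejected. I would also check that validity, including recovering $i$ from $|u_j|=n_i$, is decidable in time polynomial in $q_i$. Since $n_{i+1}=2^{n_i}$, the index $i$ is about $\log^* n_i$, so this looks routine.

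\textbf{Step 2: recheck the upper bound.} I would recheck the $\mathsf{NP}$ Membership Theorem under this fixed convention. The verifier must decide $L_i(u_j)=\Phi(P_i)(u_j)$. This needs $P_i(u_j)=0$, computable in $\Theta(n_i^i)$ time, and also \emph{ACTIVE}$(P_i)$. The latter is an existential over all of $D_i$, which has $2^{n_i}$ elements. It is not covered by the stated $\Theta(n_i^i)$ bound. To make the verifier polynomial, I would try to enlarge the certificate by a second string $w\in D_i$ with $P_i(w)=1$, checkable in $\Theta(n_i^i)$ time. The exponent bound $i/(i-k+1)\le k$ would also need rechecking for small $i$, when $i\le k-1$. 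I would handle those finitely many stages separately.

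\textbf{Step 3: recheck the lower bound.} I would verify that the Bridge Theorem applies to the restricted language. The slice $(u,z_i,\ldots,z_i)$ must itself be a valid aggregate object. That means choosing $m=\Theta(n_i^{i-k})$ copies, so the slice length is polynomially related to $n_i$ and respects the growth condition. The slice also needs the inert sentinel $z_i\in D_i\setminus L_i$ to exist and be computable from $i$. If $P_i$ is inactive, then $L_i=\emptyset$ and any string works. If $P_i$ is active, then $z_i$ must satisfy $P_i(z_i)=1$. Finding such a string uniformly in $i$ is not obviously effective, so this is an obstacle in its own right.

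\textbf{Main obstacle.} I expect the hardest step to be the soundness of the lower-bound chain through PIT. The recursion-theoretic fixed point $P_{i^*}=P_{f(i^*)}$ must be a legitimate \emph{stage machine}. It must be total on $D_{i^*}$, run in time $\Theta(n_{i^*}^{i^*})$, and be interpreted with the stage index coinciding with the program index. Kleene's theorem guarantees none of these, and the Stage Identity depends on these standing assumptions. Before concluding the Main Theorem, I would try first to restate the hierarchy so that every index of the acceptable numbering is a stage, with a clocked running time. I would then re-derive PIT in that setting. If this cannot be done, the contradiction in PIT does not transfer, and the combination step yields nothing.
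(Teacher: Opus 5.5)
There is a genuine gap, and it is the one you flagged yourself. Your top-level step is the paper's (Corollary~1 plus Theorem~4), but your proposal ends in a conditional whose hypothesis cannot be met: the clocked restatement you propose makes PIT false. Suppose every stage machine is total on the finite domain $D_i$ and clocked at $n_i^i$. This is what you need, and also what the $\mathsf{NP}$ verifier needs. Then each $L_i$ is uniformly decidable by exhaustive search. Run $P_i$ on all $2^{n_i}$ strings of $D_i$ to evaluate \emph{ACTIVE}$(P_i)$, then accept $x\in D_i$ iff $P_i(x)=0$ and \emph{ACTIVE}$(P_i)=1$. That is an effective predictor family. The PIT proof breaks where you suspected. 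Kleene yields $\varphi_{i^*}=\varphi_{f(i^*)}$ only for unclocked programs, whereas the stage machine $P_{i^*}$ is the clocked program. For the exhaustive-search $G$, $F(i^*)$ needs at least $2^{n_{i^*}}$ steps on $x\neq s_{i^*}$, far beyond the clock. So the step $L(P_{i^*})=L(G(i^*))\cup\{s_{i^*}\}$ fails, and no contradiction arises.

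The paper's standing assumptions cannot hold together, because an acceptable numbering contains partial machines. Suppose instead the $P_i$ really range over an acceptable numbering, so that totality and the $\Theta(n_i^i)$ bound are lost. The diagonal argument then does rule out a \emph{total} computable predictor. But the Bridge Theorem never uses the polynomial bound on $M$. It does not even need $z_i$, since \emph{MIS}$(u,u,\ldots,u)=L_i(u)$ is also a slice, which also disposes of your Step~3 worry. So in this reading \emph{MIS} is undecidable, hence not in $\mathsf{NP}$. No single consistent reading makes both halves true. The paper's two-line combination pairs a lower bound valid only for unclocked machines with an upper bound valid only for clocked ones. Your Step~2 repair is correct: the certificate $(u_j,w)$ with $P_i(u_j)=0$ and $P_i(w)=1$ fixes a real omission in the paper's verifier, which ignores \emph{ACTIVE}. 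It gives \emph{MIS} $\in\mathsf{NP}$ in the clocked reading. There, however, \emph{MIS} $\notin\mathsf{P}$ would imply $\mathsf{P}\neq\mathsf{NP}$. An argument built only from \emph{S-m-n}, the Recursion Theorem and simulation relativizes, so by Baker--Gill--Solovay it cannot establish that.
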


\begin{proof}
By Theorem 4, {\it MIS} $\in \mathsf{NP}$ and {\it MIS} $\notin \mathsf{P}$ by Corollary 1, so {\it MIS} $\in \mathsf{NP \setminus P}$.
\end{proof}

\subsection{Dependency Structure}
The logical dependencies $(\Downarrow)$ and direct implications $(\rightarrow)$ in the paper are summarized below:\vspace*{5mm}

\text{Acceptable Numbering + {\it S-m-n}\, Theorem + Kleene Recursion Theorem}

$\Downarrow$

\text{Predictor-Impossibility Theorem 1 + Slice Theorem 2 + Bridge Theorem 3}

$\Downarrow$

\text{Corollary 1} $\rightarrow$ {\it MIS} $\notin \mathsf{P}$ 

$\Downarrow$

\text{Theorem 4} \,$\rightarrow$ {\it MIS} $\in \mathsf{NP}$

$\Downarrow$ 

\text{Theorem 5} \,$\rightarrow$ {\it MIS} $\in \mathsf{NP \setminus P}$.


\begin{thebibliography}{9}
\bibitem{Kleene}
C. Kleene, \emph{Introduction to Metamathematics}, North-Holland, 1952.
\bibitem{Rogers}
H. Rogers, Jr., \emph{Theory of Recursive Functions and Effective Computability}, MIT Press, 1987.

\bibitem{Sipser}
M. Sipser, \emph{Introduction to the Theory of Computation}, 3rd  ed., Cengage Learning, 2012.
\end{thebibliography}
\end{document}